\newtheorem{theorem}{Theorem}
\newtheorem{lemma}[theorem]{Lemma}
\newtheorem{definition}[theorem]{Definition}
\def\@endtheorem{\endtrivlist}
\newcommand{\Lnum}{\mathrm{Lnum}}
\newcommand{\Rnum}{\mathrm{Rnum}}
\newcommand{\Lnumi}[1]{\mathrm{Lnum}_{#1}}
\newcommand{\Psize}{\mathrm{Psize}}
\newcommand{\kpath}{\textsc{$k$-Path}}
\newcommand{\cutkpath}{\textsc{Cut $k$-Path}}
\newcommand{\ktree}{\textsc{$k$-Tree}}
\newcommand{\kmatching}{\textsc{$r$-Dimensional $k$-Matching}}
\newcommand{\graphmotif}{\textsc{Graph Motif}}
\newcommand{\partialcover}{\textsc{Partial Cover}}
\newcommand{\kstpath}{\textsc{$k$-$(s,t)$-Path}}
\newcommand{\longstpath}{\textsc{Long $(s,t)$-Path}}
\newcommand{\longcycle}{\textsc{Long Cycle}}
\newcommand{\exactdetour}{\textsc{Exact Detour}}
\newcommand{\jl}{j_l}
\newcommand{\jr}{j_r}
\newcommand{\alphal}{\alpha_l}
\newcommand{\alphar}{\alpha_r}
\newcommand{\cla}{c_l}
\newcommand{\cra}{c_r}
\newcommand{\crb}{c'}
\begin{document}

\title{Faster deterministic parameterized algorithm for \kpath}
\author{Dekel Tsur%
\thanks{Department of Computer Science, Ben-Gurion University of the Negev.
Email: \texttt{dekelts@cs.bgu.ac.il}}}
\date{}
\maketitle

\begin{abstract}
In the \kpath\ problem, the input is a directed graph $G$ and an
integer $k\geq 1$, and the goal is to decide whether there is a simple directed
path in $G$ with exactly $k$ vertices.
We give a deterministic algorithm for \kpath\ with time complexity
$O^*(2.554^k)$. This improves the previously best deterministic algorithm
for this problem of Zehavi [ESA 2015] whose time complexity is $O^*(2.597^k)$.
The technique used by our algorithm can also be used to obtain faster
deterministic algorithms for \ktree, \kmatching, \graphmotif, and \partialcover.
\end{abstract}

\paragraph{Keywords} graph algorithms, k-path, parameterized complexity.

\section{Introduction}

In the \kpath\ problem, the input is a directed graph $G$ and an
integer $k\geq 1$, and the goal is to decide whether there is a simple directed
path in $G$ with exactly $k$ vertices.
Several papers gave parameterized algorithm for this problem, both
deterministic~\cite{monien1985find,bodlaender1993linear,alon1995color,kneis2006divide,chen2009randomized,fomin2014efficient,shachnai2016representative,zehavi2015mixing} and
randomized~\cite{alon1995color,huffner2008algorithm,kneis2006divide,chen2009randomized,koutis2008faster,williams2009finding,bjorklund2017narrow}.
See Table~\ref{tab:results} for a summary of deterministic parameterized
algorithms for \kpath.
The fastest deterministic parameterized algorithm for \kpath\ was given by
Zehavi~\cite{zehavi2015mixing} and its time complexity is $O^*(2.597^k)$.
In this paper, we give a deterministic algorithm for \kpath\ with time
complexity $O^*(2.554^k)$.

Similarly to the technique of~\cite{zehavi2015mixing}, the technique presented
in this paper can be used to obtain faster deterministic algorithms for other
parameterized problems.
Specifically, for the \ktree, \kmatching, \graphmotif, and \partialcover\
problems, we obtain running times of
$O^*(2.554^k)$, $O^*(2.554^{(r-1)k})$, $O^*(2.554^{2k})$, and $O^*(2.554^k)$,
respectively.
This improves the previously fastest deterministic algorithms for these
problems obtained in~\cite{zehavi2015mixing}, whose running times are
$O^*(2.597^k)$, $O^*(2.597^{(r-1)k})$, $O^*(2.597^{2k})$, and $O^*(2.597^k)$,
respectively.

Our algorithm (as other algorithms for the \kpath\ problem) also solves
a generalization of \kpath\ called \kstpath.
In this problem, the input is
a directed graph $G$, two vertices $s,t$, and an integer $k$,
and the goal is to decide whether there is a simple directed path
from $s$ to $t$ in $G$ with exactly $k$ vertices.
An algorithm for \kstpath\ can be used as a black-box for solving other graph
problems.
Fomin et al.~\cite{fomin2018long} showed that an algorithm for \kstpath\ can be
used to solve the \longstpath\ and \longcycle\ problems.
Bez{\'a}kov{\'a} et al.~\cite{bezakova2017finding} showed that an algorithm for
\kstpath\ can be used to solve the \exactdetour\ problem.
Using our algorithm for \kstpath\ instead the algorithm
of~\cite{zehavi2015mixing} gives faster algorithms for these problems.

Our algorithm is based on the algorithm of Zehavi~\cite{zehavi2015mixing},
with a simple modification: replacing the universal family with an
approximate universal family~\cite{zehavi2016parameterized}.
This causes several straightforward additional changes to the algorithm
of~\cite{zehavi2015mixing} and its analysis.
For completeness, we describe these changes in full.
We note that these changes makes our algorithm substantially simpler than the
algorithm of~\cite{zehavi2015mixing}.
We also note that our algorithm can be extended to solve the weighted variant
of \kpath.
To simplify the presentation, we will describe an algorithm for the unweighted
problem.

\begin{table}
\caption{Deterministic algorithms for the \kpath\ problem.}
\label{tab:results}
\centering
\begin{tabular}{ll}
\toprule
Reference & Running time \\
\midrule
Monien~\cite{monien1985find} & $O^*(k!)$ \\
Alon et al.~\cite{alon1995color} & $O^*(c^k)$ \\
Kneis et al.~\cite{kneis2006divide} & $O^*(16^k)$ \\
Chen et al.~\cite{chen2009randomized} & $O^*(4^{k+o(k)})$ \\
Fomin et al.~\cite{fomin2014efficient} & $O^*(2.851^k)$ \\
Fomin et al.~\cite{fomin2016efficient},
Shachnai and Zehavi~\cite{shachnai2016representative} & $O^*(2.619^k)$ \\
Zehavi~\cite{zehavi2015mixing} & $O^*(2.597^k)$ \\
This paper & $O^*(2.554^k)$\\
\bottomrule
\end{tabular}
\end{table}

\section{Preliminaries}
In this section we describe two tools, representative families and
approximate universal families, that will be used in our algorithm.

Representative families is a tool that is very useful in the
design of parameterized algorithms (cf.~\cite{fomin2016efficient}).
In particular, it was used for giving efficient algorithms for \kpath\
in~\cite{fomin2014efficient,fomin2016efficient,shachnai2016representative}.
\begin{definition}
Let $U$ be a set, called \emph{universe},
and $\mathcal{S}$ be a family of subsets of size $p$ of $U$.
We say that $\widehat{\mathcal{S}} \subseteq \mathcal{S}$ \emph{$q$-represents}
$\mathcal{S}$ if for every set $B\subseteq U$ of size at most $q$,
if there is a set $A\in\mathcal{S}$ disjoint from $B$ then
there is a set $\widehat{A}\in\widehat{\mathcal{S}}$ disjoint from $B$.
\end{definition}

\begin{theorem}[Fomin et al.~\cite{fomin2016efficient},
Shachnai and Zehavi~\cite{shachnai2016representative}]
\label{thm:representative}
There is an algorithm that given $c \geq 1$, integers $p$ and $k \geq p$,
and a family $\mathcal{S}$ of subsets of size $p$ of $U$,
constructs a family $\widehat{\mathcal{S}} \subseteq \mathcal{S}$ that
$(k-p)$-represents $\mathcal{S}$ with size
$\frac{{(ck)}^{k}}{p^p{(ck-p)}^{k-p}} 2^{o(k)} \log|U|$.
The construction time is
$O(|\mathcal{S}|(\frac{ck}{ck-p})^{k-p} 2^{o(k)} \log|U|)$.
\end{theorem}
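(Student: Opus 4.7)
The plan is to use the standard matroid-theoretic encoding of representative families. For each $A \in \mathcal{S}$ I would assign a vector $v_A$ in a vector space of dimension $D = \binom{ck}{p}$ so that (a) disjointness from a set $B$ of bounded size is captured by a linear-algebraic condition on $v_A$, and (b) any subfamily whose vectors span $\mathrm{span}\{v_A : A \in \mathcal{S}\}$ is automatically $(k-p)$-representative. The representative family $\widehat{\mathcal{S}}$ is then obtained by keeping exactly those sets whose vectors form a basis of this span, computed by incremental Gaussian elimination.

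Concretely, fix a matrix $M$ of size $(ck) \times |U|$ over a field of size $\mathrm{poly}(|U|)$ whose columns $m_u$ are in sufficiently general position that $M$ linearly represents the uniform matroid of rank $ck$ on $U$ (any $ck$ columns are linearly independent). For $A = \{a_1,\dots,a_p\}$, set $v_A = m_{a_1} \wedge \cdots \wedge m_{a_p}$ in the exterior power $\wedge^p \mathbb{F}^{ck}$. The standard wedge identity gives $v_A \wedge v_B = 0$ iff $A \cap B \neq \emptyset$, whenever $|A|+|B| \le ck$. A one-line exchange argument then shows that any $\widehat{\mathcal{S}}$ spanning $\mathrm{span}\{v_A\}_{A \in \mathcal{S}}$ is $(ck-p)$-representative: if $A$ avoids $B$ then $v_A \wedge v_B \neq 0$, and expanding $v_A$ as a linear combination of the $v_{\widehat{A}}$ forces some $v_{\widehat{A}} \wedge v_B \neq 0$. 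Since $c \ge 1$ gives $ck-p \ge k-p$, this yields the desired $(k-p)$-representation.

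For the size bound, $|\widehat{\mathcal{S}}| \le \dim \wedge^p \mathbb{F}^{ck} = \binom{ck}{p} \le \frac{(ck)^k}{p^p(ck-p)^{k-p}} 2^{o(k)}$ by Stirling's approximation, and the $\log|U|$ factor accommodates the bit-length of field elements. For the construction time, processing sets one at a time and maintaining a row-reduced basis of the current span keeps the per-set cost down; exploiting the structured sparsity of wedge-product vectors (as in the cited works) reduces the per-set overhead to $(\frac{ck}{ck-p})^{k-p} 2^{o(k)} \log|U|$, which multiplied by $|\mathcal{S}|$ matches the stated bound.

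The main obstacle is the deterministic construction of $M$ together with the algorithmic speedup in Gaussian elimination: a random $M$ works immediately by Schwartz--Zippel, but derandomizing while still attaining the $2^{o(k)}$ overhead demands an explicit matrix (e.g.\ of Vandermonde type over a suitable extension field), and reducing the per-set cost below the naive $O(\binom{ck}{p})$ requires a non-trivial structural observation about how wedge vectors interact during elimination. These two technical points form the substantive content of the Fomin--Lokshtanov--Saurabh and Shachnai--Zehavi constructions that this theorem black-boxes.
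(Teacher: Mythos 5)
You should first note that the paper does not prove this statement at all: it is imported verbatim as a black box from Fomin et al.\ and Shachnai--Zehavi, so the only meaningful comparison is with the constructions in those cited works. Measured against them (and against the bounds claimed in the theorem), your proposal has a genuine gap, and it is exactly at the point you defer to "the substantive content" of the cited papers. The wedge-product/basis argument you sketch lives in $\wedge^p\mathbb{F}^{ck}$, whose dimension is $\binom{ck}{p}$, and by Stirling $\binom{ck}{p}=\frac{(ck)^{ck}}{p^p(ck-p)^{ck-p}}2^{o(k)}$, which exceeds the claimed size bound $\frac{(ck)^k}{p^p(ck-p)^{k-p}}2^{o(k)}$ by the factor $\bigl(\frac{ck}{ck-p}\bigr)^{(c-1)k}$ --- exponential in $k$ whenever $c>1$ and $p=\Theta(k)$. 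Concretely, for $c=1.5$ and $p=k/2$ your basis can have size $\binom{1.5k}{0.5k}\approx 2.598^k$, while the theorem promises roughly $2.13^k$; your inequality "$\binom{ck}{p}\le \frac{(ck)^k}{p^p(ck-p)^{k-p}}2^{o(k)}$" is simply false in this regime. The timing claim fails for the same reason: the advertised per-set cost $\bigl(\frac{ck}{ck-p}\bigr)^{k-p}2^{o(k)}$ (about $1.23^k$ in the example) is far below the dimension $\binom{ck}{p}$ of the ambient space, so no variant of incremental Gaussian elimination on these wedge vectors can run within it --- one cannot even write a single vector down. No "structured sparsity" observation rescues this; it is not how the cited algorithms work.

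The real content of the theorem is precisely the size/time tradeoff governed by $c$, and the exterior-algebra route cannot produce it: with the correct rank for the disjointness argument ($p+q=k$, i.e.\ effectively $c=1$) it yields families of size about $\binom{k}{p}$ but with per-set cost polynomial in $\binom{k}{p}$ (with fast-matrix-multiplication exponents), which is the slower, non-traded-off statement. Fomin--Lokshtanov--Panolan--Saurabh, and then Shachnai--Zehavi with a refined analysis, obtain the stated bounds for uniform matroids by a completely different, purely combinatorial mechanism: an explicit two-level construction of $(n,p,q)$-separating collections built from splitters/universal sets (a derandomized color-coding--style hashing of the universe), in which the parameter $c$ controls how "lopsided" the separating collection is and hence trades family size against evaluation time. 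If you want a self-contained proof of this theorem, that separating-collection construction is the part you actually have to supply; the matroid/wedge-product argument you outline proves a different (weaker in time, and for $c>1$ incomparable in size) statement.
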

Suppose that $|U|=n$ and the size of $\mathcal{S}$ satisfies the bound on the
size of $\widehat{\mathcal{S}}$ of the lemma, namely
$|\mathcal{S}| = O^*(\frac{{(ck)}^{k}}{p^p{(ck-p)}^{k-p}}2^{o(k)})$.
Then, the construction time of $\widehat{\mathcal{S}}$ is
$O^*\left(\frac{{(ck)}^{2k-p}}{p^p{(ck-p)}^{2k-2p}} \cdot 2^{o(k)}\right) =
O^*({\phi_c(p/k)}^k \cdot 2^{o(k)})$,
where
$\phi_c(\alpha) = \frac{c^{2-\alpha}}{\alpha^\alpha{(c-\alpha)}^{2-2\alpha}}$
(we assume that $\phi_c(0)=1$). 

In order to obtain an improved algorithm for \kpath,
Zehavi~\cite{zehavi2015mixing} used the following generalization of
representative families.
\begin{definition}
Let $U_1,\ldots,U_t$ be disjoint sets,
$p_1,\ldots,p_t,q_1,\ldots,q_t$ be non-negative integers,
and $\mathcal{S}$ be a family of subsets of $U=\bigcup_{i\leq t} U_i$ such
that for every $A\in\mathcal{S}$, $|A\cap U_i| = p_i$ for all $i\leq t$.
We say that $\widehat{\mathcal{S}} \subseteq \mathcal{S}$
\emph{$(q_1,\ldots,q_t)$-represents}
$\mathcal{S}$, if for every set $B\subseteq U$ for which
$|B\cap U_i|\leq q_i$ for all $i\leq t$,
if there is a set $A\in\mathcal{S}$ disjoint from $B$ then
there is a set $\widehat{A}\in\widehat{\mathcal{S}}$ disjoint from $B$.
\end{definition}

\begin{theorem}[Zehavi~\cite{zehavi2015mixing}]
\label{thm:generalized-representative}
There is an algorithm that given $c_1,\ldots,c_t \geq 1$, integers
$p_1,\ldots,p_t,k_1,\ldots,k_t$,
and a family $\mathcal{S}$ of subsets of $U=\bigcup_{i\leq t} U_i$ such
that for every $A\in\mathcal{S}$, $|A\cap U_i| = p_i$ for all $i\leq t$,
constructs a family $\widehat{\mathcal{S}} \subseteq \mathcal{S}$ that 
$(k_1-p_1,\ldots,k_t-p_t)$-represents
$\mathcal{S}$ with size $\prod_{i\leq t}(
\frac{{(c_i k_i)}^{k_i}} {{p_i}^{p_i}{(c_i k_i-p_i)}^{k_i-p_i}} \cdot 2^{o(k_i)}
\log|U_i|)$.
The construction time is
$O(|\mathcal{S}|\prod_{i\leq t}(
(\frac{c_i k_i}{c_i k_i-p_i})^{k_i-p_i} 2^{o(k_i)} \log|U_i|))$.
\end{theorem}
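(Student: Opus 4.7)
The plan is to mimic the matroid/polynomial-method proof of Theorem~\ref{thm:representative}, adapted to the partition matroid $M = M_1 \oplus \cdots \oplus M_t$, where $M_i$ is the uniform matroid of rank $k_i$ on $U_i$. Every $A \in \mathcal{S}$ is $M$-independent of rank $p := \sum_i p_i$, and a blocker $B$ with $|B \cap U_i| \leq k_i - p_i$ is exactly the kind of obstacle that can be tested coordinate-wise. In the single-matroid setting, the construction of~\cite{fomin2016efficient,shachnai2016representative} behind Theorem~\ref{thm:representative} represents each $A$ by a wedge $e_A \in \bigwedge^p V$ of dimension $\binom{k}{p}$ and picks a spanning subfamily; the improved size $\frac{(ck)^k}{p^p(ck-p)^{k-p}} 2^{o(k)} \log|U|$ then comes from replacing the full wedge space by a lopsided universal family that relaxes the spanning requirement.

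For our product matroid, the \emph{block-diagonal} representation places each $e_A$ inside the tensor subspace $\bigotimes_i \bigwedge^{p_i} V_i$ of dimension $\prod_i \binom{k_i}{p_i}$, so a spanning subfamily already has size $\prod_i \binom{k_i}{p_i}$. Replacing each factor by the per-coordinate lopsided universal family $\mathcal{F}_i \subseteq 2^{U_i}$ of size $N_i := \frac{(c_i k_i)^{k_i}}{p_i^{p_i}(c_i k_i - p_i)^{k_i-p_i}} 2^{o(k_i)} \log|U_i|$ (as supplied by the construction underlying Theorem~\ref{thm:representative}) and taking the product $\mathcal{F} = \{F_1 \cup \cdots \cup F_t : F_i \in \mathcal{F}_i\}$ yields $\widehat{\mathcal{S}}$ of size $\prod_i N_i$: for each $F \in \mathcal{F}$, include one $A \in \mathcal{S}$ with $A \cap U_i \subseteq F_i$ for every $i$, if such an $A$ exists. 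Correctness is then coordinate-wise: given a blocker $B$ with $|B \cap U_i| \leq k_i - p_i$ and a witness $A \in \mathcal{S}$ disjoint from $B$, the lopsided property of each $\mathcal{F}_i$ supplies $F_i$ with $A \cap U_i \subseteq F_i$ and $F_i \cap B \cap U_i = \emptyset$; the union $F = \bigcup_i F_i$ lies in $\mathcal{F}$, contains $A$, and avoids $B$, so the representative picked for $F$ is also disjoint from $B$.

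The hard part is to match the running time $O\bigl(|\mathcal{S}| \prod_i T_i\bigr)$ with $T_i = \bigl(\frac{c_i k_i}{c_i k_i - p_i}\bigr)^{k_i-p_i} 2^{o(k_i)} \log|U_i|$, rather than the naive $O\bigl(|\mathcal{S}| \prod_i N_i\bigr)$ one gets by enumerating all of $\mathcal{F}$ and scanning $\mathcal{S}$ against each $F$. To close this gap I would lift the fast Gaussian-elimination-style algorithm behind Theorem~\ref{thm:representative} to the block-diagonal matroid: since the representation matrix of $M$ is a direct sum of the block matrices of the $M_i$'s, the linear-algebraic reduction factors across the $t$ coordinates, and each coordinate then contributes its own multiplicative factor of $T_i$ to the work. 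Careful bookkeeping of the intermediate subfamilies across the $t$ rounds, tracking how the per-round cost scales with the size of the current family and the coordinate factor $T_i$, should then yield the claimed $O\bigl(|\mathcal{S}| \prod_i T_i\bigr)$ bound.
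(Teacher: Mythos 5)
Note first that the paper you are working from does not prove this statement at all: it is imported verbatim from Zehavi~\cite{zehavi2015mixing}, so your attempt has to be judged against that external proof. The combinatorial core of your second paragraph is essentially the right route and matches the known one: build, for each coordinate $i$, a separating (``lopsided universal'') family $\mathcal{F}_i \subseteq 2^{U_i}$ of size $N_i$, take the product family $\{F_1\cup\cdots\cup F_t\}$, and keep one set of $\mathcal{S}$ per product cell. Since $F_i\subseteq U_i$, a set $A$ with $A\cap U_i\subseteq F_i$ for all $i$ is contained in $F$, and $F\cap B=\bigcup_i (F_i\cap B\cap U_i)=\emptyset$, so the coordinate-wise correctness argument and the size bound $\prod_i N_i$ both go through.

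The genuine gap is the running time, and your sketch for closing it invokes the wrong mechanism. For uniform (hence partition) matroids the improved bounds of Theorem~\ref{thm:representative} are not obtained by Gaussian elimination on a matroid representation, and a ``spanning subfamily of $\bigotimes_i\bigwedge^{p_i}V_i$'' argument would only give size $\prod_i\binom{k_i}{p_i}$ with no $c_i$ trade-off and, more to the point, says nothing about construction time. What the claimed bound $O(|\mathcal{S}|\prod_i(\frac{c_ik_i}{c_ik_i-p_i})^{k_i-p_i}2^{o(k_i)}\log|U_i|)$ actually requires is a property of the specific separating-collection construction of \cite{fomin2016efficient,shachnai2016representative}: every fixed $p_i$-subset of $U_i$ is contained in at most roughly $(\frac{c_ik_i}{c_ik_i-p_i})^{k_i-p_i}2^{o(k_i)}\log|U_i|$ members of $\mathcal{F}_i$, and these members can be \emph{enumerated} in time proportional to their number (the ``construction with pre-computed data''). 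Given that, one processes each $A\in\mathcal{S}$, lists the $\prod_i T_i$ product cells containing it, and fills empty buckets, which yields the stated time. Without stating and using this enumeration property, your construction only supports the naive $O(|\mathcal{S}|\prod_i N_i)$ bound that you yourself identify as insufficient; ``careful bookkeeping across the $t$ rounds'' of a linear-algebraic reduction, as described, does not produce the per-set cost $\prod_i T_i$. (A smaller inaccuracy: your first paragraph's account of where the improved size in Theorem~\ref{thm:representative} comes from, as a relaxation of a spanning requirement in the wedge space, is not how that bound arises, though this does not affect your later combinatorial argument.)
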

Again, suppose that $|U|=n$ and $\mathcal{S} = O^*(\prod_{i\leq t}(
\frac{{(c_i k_i)}^{k_i}} {{p_i}^{p_i}{(c_i k_i-p_i)}^{k_i-p_i}} \cdot
2^{o(k_i)}))$.
Then, the construction time of $\widehat{\mathcal{S}}$ is
$O^*(\prod_{i\leq t}(
\frac{{(c_i k_i)}^{2k_i-p_i}} {{p_i}^{p_i}{(c_i k_i-p_i)}^{2k_i-2p_i}}
\cdot 2^{o(k_i)})) =
O^*(\prod_{i\leq t} ({\phi_{c_i}(p_i/k_i)}^{k_i} \cdot 2^{o(k_i)}))$.

The algorithm of Zehavi~\cite{zehavi2015mixing} also uses universal families.
\begin{definition}
Let $\mathcal{F}$ be a family of subsets of a set $U$, where $|U|=n$.
We say that $\mathcal{F}$ is an \emph{$(n,p,q)$-universal family} if for every
disjoint sets $A,B\subseteq U$ of sizes $p$ and $q$, respectively,
there is a set $F\in\mathcal{F}$ such that $A\subseteq F$ and
$B\cap F=\emptyset$.
\end{definition}

\begin{lemma}[Fomin et al.~\cite{fomin2014efficient}]\label{lem:universal}
There is an algorithm that given integers $n,p,q$,
constructs an $(n,p,q)$-universal family of size
$O(\binom{p+q}{q} 2^{o(p+q)}\cdot \log n)$ in
$O(\binom{p+q}{q} 2^{o(p+q)}\cdot n\log n)$ time.
\end{lemma}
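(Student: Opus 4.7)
The plan is to build the universal family from a \emph{splitter} (equivalently, a perfect hash family) together with an exhaustive enumeration of $p$-subsets of a small coordinate set. Recall that an $(n,k)$-splitter is a family $\mathcal{H}$ of functions from $[n]$ to $[k]$ such that for every $S\subseteq[n]$ with $|S|\le k$, at least one $h\in\mathcal{H}$ is injective on $S$. By the Naor--Schulman--Srinivasan construction, an $(n,p{+}q)$-splitter of size $2^{o(p+q)}\log n$ can be built in $O(2^{o(p+q)}\,n\log n)$ time.

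Given such a splitter $\mathcal{H}$, I would define
\[
\mathcal{F} \;=\; \bigl\{\, h^{-1}(T) : h\in\mathcal{H},\ T\subseteq[p{+}q],\ |T|=p \,\bigr\}.
\]
To verify the universal property, fix disjoint $A,B\subseteq U$ with $|A|=p$ and $|B|=q$. Since $|A\cup B|=p+q$, the splitter guarantees some $h\in\mathcal{H}$ that is injective on $A\cup B$; taking $T=h(A)$, the set $F=h^{-1}(T)$ contains $A$ (by construction) and is disjoint from $B$ (because injectivity on $A\cup B$ forces $h(B)\cap T=\emptyset$). This is exactly the required property.

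For the counting, $|\mathcal{F}|\le |\mathcal{H}|\cdot\binom{p+q}{p} = O\bigl(\binom{p+q}{q}\,2^{o(p+q)}\log n\bigr)$, since $\binom{p+q}{p}=\binom{p+q}{q}$. Each set $h^{-1}(T)$ is produced in $O(n)$ time by a single scan of $[n]$, so the total construction time is bounded by the splitter construction time plus $O\bigl(|\mathcal{H}|\cdot\binom{p+q}{p}\cdot n\bigr) = O\bigl(\binom{p+q}{q}\,2^{o(p+q)}\,n\log n\bigr)$, matching the claim.

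The only nontrivial ingredient is the existence of the $2^{o(p+q)}\log n$-sized splitter; everything else is a one-line reduction. So the main ``obstacle'' is really just invoking the right black box from the perfect hashing literature; once the splitter is available, the correctness check and the size/time bookkeeping are immediate.
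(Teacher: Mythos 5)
Your reduction itself (hash, then enumerate the possible color sets of $A$) is sound as far as correctness goes, but the black box you invoke does not exist, and this is a genuine gap rather than a citation technicality. A family of functions $h\colon[n]\to[p+q]$ that is injective on every $(p+q)$-subset is a perfect hash family with exactly $k=p+q$ colors, and such a family must have size exponential in $k$ with base $e$: a single $h$ with color classes of sizes $n_1,\dots,n_k$ is injective on exactly $\prod_i n_i\le (n/k)^k$ of the $\binom{n}{k}$ $k$-subsets, i.e.\ on at most a $\frac{k!}{k^k}\bigl(\tfrac{n}{n-k}\bigr)^k\approx e^{-k}\sqrt{2\pi k}$ fraction of them, so any such family has size $\Omega(e^k/\mathrm{poly}(k))$. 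Correspondingly, what Naor--Schulman--Srinivasan actually give for $k$-perfect hashing is size $e^k k^{O(\log k)}\log n$, not $2^{o(k)}\log n$; the $2^{o(k)}$-type bounds in that paper are for $(n,k,k^2)$-splitters (which are only polynomial in size but use $k^2$ colors) and for $(n,k)$-universal sets (size $2^k k^{O(\log k)}\log n$), neither of which is the object you need. Plugging the true bound into your construction yields a family of size $\binom{p+q}{q}e^{p+q}2^{o(p+q)}\log n$, e.g.\ about $5.44^k\log n$ when $p=q=k/2$ instead of the required $2^{k+o(k)}\log n$, which would destroy the running times this lemma is used for.

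Fixing this is not a one-line matter: replacing your hash family by an $(n,p+q,(p+q)^2)$-splitter (which does have size $\mathrm{poly}(p+q)\log n$) makes the enumeration over color sets cost roughly $\binom{(p+q)^2}{p}=2^{\Theta(k\log k)}$, again far too much. The construction of Fomin et al.\ that the lemma cites indeed starts by reducing the universe to size $(p+q)^2$ via such a splitter, but then needs a further, more delicate two-level scheme on the small universe (grouping coordinates into small blocks and brute-forcing within blocks, with a $2^{o(p+q)}$ overhead for guessing how $A\cup B$ distributes among blocks) precisely so that the exponential factor comes out as $\binom{p+q}{q}$, matching the probabilistic bound $\bigl(\tfrac{p}{p+q}\bigr)^{-p}\bigl(\tfrac{q}{p+q}\bigr)^{-q}$, rather than as $e^{p+q}$ or $2^{\Theta(k\log k)}$. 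So the heart of the lemma is exactly the part you outsourced, and the outsourcing target is unavailable.
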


In order to obtain our improved algorithm, we use a generalization of
universal families called approximate universal
families~\cite{zehavi2016parameterized}.
\begin{definition}
Let $\mathcal{F}$ be a family of subsets of a set $U$, where $|U|=n$.
We say that $\mathcal{F}$ is an \emph{$(n,p,q,\zeta)$-approximate universal
family}
if for every disjoint sets $A,B\subseteq U$ of sizes $p$ and $q$, respectively,
there is a set $F\in\mathcal{F}$ such that
$|A\setminus F| \leq \lfloor\zeta p\rfloor$ and $B\cap F = \emptyset$.
\end{definition}
\begin{lemma}[Zehavi~\cite{zehavi2016parameterized}]\label{lem:approximate-universal}
There is an algorithm that given integers $n,p,q$ and $0 < \zeta < 1$,
constructs an $(n,p,q,\zeta)$-approximate universal family of size
$O(\frac{1}{\eta^p x^{(1-\zeta)p}{(1-x)}^{q+\zeta p}}
\cdot 2^{o(p+q)} \cdot \log n)$ in
$O(\frac{1}{\eta^p x^{(1-\zeta)p}{(1-x)}^{q+\zeta p}} 
\cdot 2^{o(p+q)} \cdot n\log n)$
time, where $x = \frac{(1-\zeta)p}{p+q}$ and
$\eta = \frac{1}{\zeta^\zeta{(1-\zeta)}^{(1-\zeta)}}$.
\end{lemma}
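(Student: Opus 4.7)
The plan is to adapt the construction of $(n,p,q)$-universal families (Lemma~\ref{lem:universal}): keep its reduction to a small universe $[p+q]$, but replace the base family on $[p+q]$ by a smaller deterministic covering that exploits the relaxation allowing $\lfloor\zeta p\rfloor$ elements of $A$ to be missed.

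First I would justify the target size probabilistically. Sample each $F$ by including each element of $U$ independently with probability $x$; for any fixed disjoint $A,B\subseteq U$ with $|A|=p$ and $|B|=q$, the event $|A\setminus F|\leq\lfloor\zeta p\rfloor$ and $B\cap F=\emptyset$ has probability at least the binomial term corresponding to exactly $\lfloor\zeta p\rfloor$ misses in $A$ and no intersection with $B$, namely $\binom{p}{\lfloor\zeta p\rfloor}x^{p-\lfloor\zeta p\rfloor}(1-x)^{\lfloor\zeta p\rfloor+q}$. Stirling gives $\binom{p}{\zeta p}=\Theta(\eta^p/\sqrt{p})$, turning this into $\Omega(\eta^p x^{(1-\zeta)p}(1-x)^{q+\zeta p}/\mathrm{poly}(p))$, and elementary calculus on the exponent confirms that $x=(1-\zeta)p/(p+q)$ is the maximizer. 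A union bound over the $\binom{n}{p}\binom{n-p}{q}$ pairs then shows that a random family of the stated size covers every pair with positive probability.

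Next I would derandomize using the same splitter apparatus that proves Lemma~\ref{lem:universal}. That lemma is obtained by combining a family $\mathcal{H}$ of $2^{o(p+q)}\log n$ hash-like functions from $[n]$ to $[p+q]$ with the base family on $[p+q]$ consisting of all $p$-subsets. Here I reuse the same $\mathcal{H}$ but replace the base family by an $\mathcal{F}'$ of subsets of $[p+q]$ such that every $p$-subset $A'\subseteq[p+q]$ contains some $F'\in\mathcal{F}'$ with $|F'|\geq(1-\zeta)p$, and output $\mathcal{F}=\{h^{-1}(F'):h\in\mathcal{H},\,F'\in\mathcal{F}'\}$.

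The main obstacle is keeping $\mathcal{F}'$ small. The obvious choice of all $(1-\zeta)p$-subsets of $[p+q]$ has size $\binom{p+q}{(1-\zeta)p}$, which exceeds the target by a factor of $\binom{p}{\zeta p}\approx\eta^p$. To close this gap I would invoke a greedy set-cover argument on the hypergraph whose vertices are the $p$-subsets of $[p+q]$ and whose hyperedges are the $(1-\zeta)p$-subsets: each hyperedge covers $\binom{q+\zeta p}{\zeta p}$ vertices while each vertex is covered by $\binom{p}{\zeta p}$ hyperedges, so the fractional covering number is $\binom{p+q}{(1-\zeta)p}/\binom{p}{\zeta p}=\binom{p+q}{p}/\binom{q+\zeta p}{\zeta p}$, and Lov\'asz--Stein yields an integral covering of this size times $O(\log\binom{p+q}{p})$ that is buildable greedily. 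Stirling rewrites this bound as $2^{o(p+q)}/(\eta^p x^{(1-\zeta)p}(1-x)^{q+\zeta p})$; multiplying by $|\mathcal{H}|$ gives the claimed $|\mathcal{F}|$, and the construction time follows from that of $\mathcal{H}$ together with the greedy covering on $[p+q]$.
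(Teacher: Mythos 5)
The paper itself does not prove this lemma; it imports it from Zehavi~\cite{zehavi2016parameterized}, so your attempt has to be judged against the known splitter-based constructions. Your probabilistic calibration is correct, and so is the counting at the heart of your covering step: the identity $\binom{p+q}{(1-\zeta)p}/\binom{p}{\zeta p}=\binom{p+q}{p}/\binom{q+\zeta p}{\zeta p}$ holds, and by Stirling this ratio is exactly $\frac{1}{\eta^p x^{(1-\zeta)p}(1-x)^{q+\zeta p}}$ up to $2^{o(p+q)}$ factors, so the size you aim for is the right one, and the general spirit (splitter machinery plus a covering design exploiting the $\lfloor\zeta p\rfloor$ slack) matches how such families are actually built.

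The gap is in the derandomization step. Lemma~\ref{lem:universal} is not obtained by pulling back a base family on $[p+q]$ through a family $\mathcal{H}$ of $2^{o(p+q)}\log n$ functions $h\colon[n]\to[p+q]$, and no such $\mathcal{H}$ with the property your lift needs can exist: for $\mathcal{F}=\{h^{-1}(F')\}$ to inherit the covering property you need some $h$ that is injective on $A\cup B$ (otherwise $h(A)$ need not be a $p$-subset and elements of $B$ may collide into $F'$, violating $B\cap F=\emptyset$), i.e., $\mathcal{H}$ would be a $(p+q)$-perfect hash family with range exactly $p+q$. A random function achieves this for a fixed $(p+q)$-set only with probability about $e^{-(p+q)}$, and Fredman--Koml\'os-type lower bounds show size $2^{\Theta(p+q)}$ is unavoidable, so $|\mathcal{H}|=2^{o(p+q)}\log n$ is impossible. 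The real constructions (Naor--Schulman--Srinivasan, Fomin et al.~\cite{fomin2014efficient}, Zehavi) instead hash to a range of size about $(p+q)^2$, split into blocks of size $O(\log(p+q))$, build the exhaustive or greedy base objects only on those logarithmic-size blocks, and combine by taking products over blocks while enumerating how the $p$ elements of $A$ --- and, in the approximate setting, the $\lfloor\zeta p\rfloor$ permitted misses --- are distributed among the blocks; your argument contains none of this bookkeeping. Relatedly, even granting your reduction, running the Lov\'asz--Stein greedy cover on all of $[p+q]$ costs at least $\binom{p+q}{p}$ time, which exceeds the claimed construction time by a factor on the order of $\binom{q+\zeta p}{\zeta p}$ (exponential in $k$ for the parameters this paper uses); confining the greedy covering to logarithmic-size blocks is precisely what makes the time bound attainable. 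So the counting is right, but the single-level lifting scheme and its time analysis must be replaced by the block-structured construction before this can stand as a proof.
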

We now define a special type of approximate universal families.
\begin{definition}
An $(n,p,q,\zeta)$-approximate universal family $\mathcal{F}$ is called
\emph{strict} if for every disjoint sets $A,B\subseteq U$ of sizes $p$ and $q$,
respectively,
there is a set $F\in\mathcal{F}$ such that
$|A\setminus F| = \lfloor\zeta p\rfloor$ and $B\cap F = \emptyset$.
\end{definition}

\begin{lemma}\label{lem:strict}
Given an $(n,p,q,\zeta)$-approximate universal family $\mathcal{F}$,
a strict $(n,p,q,\zeta)$-approximate universal family  $\mathcal{F}'$ of size
$O(|\mathcal{F}|n)$ can be constructed in $O(|\mathcal{F}|n^2)$ time.
\end{lemma}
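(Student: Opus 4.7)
The plan is to construct $\mathcal{F}'$ by, for each $F \in \mathcal{F}$, including the sets obtained by peeling off initial segments of a fixed ordering of $U$. Concretely, I would fix any ordering $v_1,\ldots,v_n$ of $U$ and set
\[
\mathcal{F}' = \{\,F \setminus \{v_1,\ldots,v_i\} : F \in \mathcal{F},\ 0 \leq i \leq n\,\}.
\]
The size bound $|\mathcal{F}'| \leq (n+1)|\mathcal{F}| = O(|\mathcal{F}|n)$ is immediate, and writing down these $n+1$ subsets per $F \in \mathcal{F}$ takes $O(n)$ time per subset, yielding the claimed $O(|\mathcal{F}| n^2)$ running time.

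For strictness, let $A, B \subseteq U$ be disjoint with $|A|=p$, $|B|=q$, and write $s = \lfloor \zeta p \rfloor$. Since $\mathcal{F}$ is $(n,p,q,\zeta)$-approximate universal, there is $F \in \mathcal{F}$ with $m := |A \setminus F| \leq s$ and $F \cap B = \emptyset$. Setting $T_i = \{v_1,\ldots,v_i\}$ and $F_i = F \setminus T_i$, a direct set-algebra computation gives
\[
A \setminus F_i = (A \setminus F) \,\sqcup\, (A \cap F \cap T_i),
\]
so $|A \setminus F_i| = m + g(i)$, where $g(i) := |A \cap F \cap T_i|$.

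The crux is then an intermediate-value argument: $g$ is integer-valued with $g(0)=0$, $g(n)=|A \cap F|=p-m$, and $g(i+1)-g(i) \in \{0,1\}$, so $g$ attains every integer in $[0,p-m]$. Since $0 \leq s-m \leq s \leq p$, we have $s-m \leq p-m$, so some $i^\ast$ satisfies $g(i^\ast) = s-m$. Then $F' := F_{i^\ast} \in \mathcal{F}'$ obeys $|A \setminus F'| = s$ and $F' \cap B \subseteq F \cap B = \emptyset$, which is exactly the strict property.

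I do not foresee any serious obstacle. The one delicate point is the set-algebra identity: a naive computation suggests $|A \setminus F_i| = m + |A \cap T_i|$, but elements of $A$ that lie in $T_i$ yet outside $F$ are already in $A \setminus F$ and would be double-counted, so the correct increment is $|A \cap F \cap T_i|$. Once this identity is in hand, the argument on $g$ closes the proof immediately.
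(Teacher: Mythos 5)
Your proof is correct and essentially matches the paper's: the paper defines $\mathcal{F}' = \{ F \cap \{1,\ldots,i\} \colon F\in\mathcal{F},\ i\leq n\}$, which is the same chain of nested subsets of each $F$ that you obtain by deleting prefixes, and the strictness follows by the same one-element-at-a-time intermediate-value argument. The only difference is that you spell out the verification (the identity $|A\setminus F_i| = |A\setminus F| + |A\cap F\cap T_i|$ and the stepwise argument), which the paper leaves as ``easy to verify.''
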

\begin{proof}
Without loss of generality, assume that $U = \{1,\ldots,n\}$.
Define
$\mathcal{F}' = \{ F \cap \{1,\ldots,i\} \colon F\in \mathcal{F}, i \leq n \}$.
It is easy to verify that $\mathcal{F}'$ is a strict
$(n,p,q,\zeta)$-approximate universal family.
\end{proof}

\section{Overview}\label{sec:overview}
In this section we give a high level description of our algorithm.

The \kpath\ problem can be solved in $n^{k+O(1)}$ time by the
following dynamic programming algorithm.
Define $\mathcal{P}^i_v$ to be a family containing all sets $X \subseteq V$
such that $|X|=i$, $v\in X$, and there is a simple path that ends at $v$
whose set of vertices is precisely $X$.
The families $\mathcal{P}^i_v$ are computed using the formula
\[\mathcal{P}^i_v = \bigcup_{u\colon (u,v)\in E}
 \bigcup_{X\in \mathcal{P}^{i-1}_u \colon v \notin X} (X\cup \{v\}). \]
To speed up this algorithm, instead of computing the families
$\mathcal{P}^i_v$,
compute families $\widehat{\mathcal{P}}^i_v \subseteq \mathcal{P}^i_v$
that $(k-i)$-represents $\mathcal{P}^i_v$.
The computation of $\widehat{\mathcal{P}}^i_v$ is done as follows.
First, compute
\[\mathcal{N}^i_v = \bigcup_{u \colon (u,v)\in E}
 \bigcup_{X\in \widehat{\mathcal{P}}^{i-1}_u \colon v \notin X} (X\cup \{v\}).
\]
Then, use Theorem~\ref{thm:representative} to compute a family
$\widehat{\mathcal{P}}^i_v$ that $(k-i)$-represents $\mathcal{N}^i_v$
(note that here the universe is $U=V$).
The time complexity of building $\widehat{\mathcal{P}}^i_v$ is roughly
\[
O^*\left( |\mathcal{N}^i_v| \cdot \left(\frac{ck}{ck-i}\right)^{k-i} \right) =
O^*\left(\frac{{(ck)}^{2k-i}}{i^i{(ck-i)}^{2k-2i}}\right) =
O^*({\phi_c(\alpha)}^k),
\]
where $\alpha = i/k$ (recall that
$\phi_c(\alpha) = \frac{c^{2-\alpha}}{\alpha^\alpha{(c-\alpha)}^{2-2\alpha}}$).
Therefore, the running time of the algorithm is
$O^*(\max_{0 \leq \alpha \leq 1} {\phi_c(\alpha)}^k)$.
The optimal choice for $c$ is $c=1+\frac{1}{\sqrt{5}} \approx 1.45$.
For this choice of $c$, the function $\phi_c(\alpha)$ is maximized when
$\alpha = 1-\frac{1}{\sqrt{5}}\approx 0.55$,
and $\phi_c(1-\frac{1}{\sqrt{5}}) = 3/2+\sqrt{5}/2 \approx
2.619$~\cite{fomin2016efficient,shachnai2016representative}.
Therefore, the running time of the algorithm is $O^*(2.619^k)$.

In order to reduce the time complexity, we use the \emph{color coding}
technique.
Suppose that $G$ contains a path of size $k$,
and let $P=p_1,\ldots,p_k$ be such path.
We call $P$ the \emph{target path}.
We describe an algorithm that is designed to find the specific path $P$
(although it may find a different path of size $k$).
Suppose that we guessed a partition of the vertices of $G$ into sets $L$ and $R$
such that $p_1,\ldots,p_{k/2} \in L$ and $p_{k/2+1},\ldots,p_k \in R$.
We call a partition of the vertices with this property \emph{good}.
Now define the following families.
$\mathcal{P}^i_{L,v}$ is family containing all sets $X \subseteq L$
such that $|X|=i$, $v\in X$, and there is a simple path that ends at $v$
whose set of vertices is precisely $X$.
$\mathcal{P}^i_{R,v}$ is family containing all sets $X \subseteq R$
such that $|X|=i$, $v\in X$, and there is a simple path with $k/2+i$ vertices
that ends at $v$ whose first $k/2$ vertices are in $L$, and the set of the last
$i$ vertices of the path is precisely $X$.
Similarly to before, the algorithm builds families $\mathcal{N}^i_{L,v}$ 
and for each family $\mathcal{N}^i_{L,v}$ it uses
Theorem~\ref{thm:representative} to generate a family
$\widehat{\mathcal{P}}^i_{L,v}$ that $(k/2-i)$-represents $\mathcal{P}^i_{L,v}$.
Similarly, the algorithm compute families $\widehat{\mathcal{P}}^i_{R,v}$
that $(k/2-i)$-represent the families $\mathcal{P}^i_{R,v}$.
The time complexity of computing one representative family is
$O^*(2.619^{k/2}) = O^*(1.619^k)$.
Note that this is considerably faster than the $O^*(2.619^k)$ bound in the first
algorithm.

So far we assumed we guessed a good partition of the vertices into sets $L$ and
$R$.
Since we want a deterministic algorithm, we need to deterministically generate
several partitions such that at least one partition is good.
Since we don't know which partitions are good, the
algorithm performs the dynamic programming stage for every partition.
Therefore, the time complexity is multiplied by the number of partitions.
The generation of a good partition is done using
an $(n,\frac{1}{2} k,\frac{1}{2} k)$-universal family $\mathcal{F}$.
For every $F\in \mathcal{F}$, define sets $L,R$ by taking $L = F$ and
$R = V\setminus F$.
By the definition of universal family, there is at least one good partition.
By Lemma~\ref{lem:universal}, the size of $\mathcal{F}$ is approximately
$\binom{k}{\frac{1}{2} k} \approx 2^k$, which means
that the total time complexity of the algorithm is
$O^*(2^k \cdot 2.619^{k/2}) = O^*(3.2376^k)$, which is worse than the first
algorithm.

The color coding algorithm is less efficient than the first algorithm
since the size of the universal family is too large.
We solve this problem by using a strict
$(n,\frac{1}{2} k,\frac{1}{2} k,\zeta)$-approximate universal family
instead of an $(n,\frac{1}{2} k,\frac{1}{2} k)$-universal family.
Note that the former family is much smaller than the latter
(see Lemma~\ref{lem:universal} and Lemma~\ref{lem:approximate-universal}).
The usage of a strict approximate universal family requires some changes in the
algorithm.
This is because now, the first half of target path contains
$\frac{1}{2}(1-\zeta)k$ vertices from $L$ and $\frac{1}{2}\zeta k$ vertices
from $R$.
Define families
\[ \mathcal{P}^{\jl,\jr}_{L,v} = \{X \in \mathcal{P}^{\jl+\jr}_{L,v}
\colon |X\cap L| = \jl, |X\cap R| = \jr, \}.\]
The algorithm uses Theorem~\ref{thm:generalized-representative}
to generate families $\widehat{\mathcal{P}}^{\jr,\jl}_{L,v}$ that
$(\frac{1}{2}(1-\zeta) k-\jl,\allowbreak \frac{1}{2} (1+\zeta) k-\jr)$-represent
the families $\mathcal{P}^{\jl,\jr}_{L,v}$.
The indices $\jl,\jr$ for which we construct these families are
$0 \leq \jl \leq \frac{1}{2}(1-\zeta) k$ and
$0 \leq \jr \leq \frac{1}{2}\zeta k$.
The time complexity of constructing a single family
$\widehat{\mathcal{P}}^{\jr,\jl}_{L,v}$ is roughly
$O^*(
{\phi_{\cla}(\alphal)}^{\frac{1}{2}(1-\zeta) k}\cdot
{\phi_{\cra}(\alphar)}^{\frac{1}{2}\zeta k}
)$,
where $\alphal = \frac{\jl}{\frac{1}{2}(1-\zeta) k}$ and
$\alphar = \frac{\jr}{\frac{1}{2}(1+\zeta) k}$.
For simplicity of the presentation, suppose that $\zeta = 0.5$ and
$\cla = \cra = 1+1/\sqrt{5}$.
Since $\alphal$ can get values between 0 and 1,
$\phi_{\cla}(\alphal)$ is maximized for
$\alpha_l = 1-\frac{1}{\sqrt{5}}\approx 0.55$ and therefore
$\phi_{\cla}(\alphal) \leq 2.619$ for all $\alphal$.
Note that $\jr \leq \frac{1}{2}\zeta k = \frac{1}{4} k$.
Therefore,  $\alphar$ can get values between 0 and
$\frac{\frac{1}{4} k}{\frac{3}{4} k} = \frac{1}{3}$.
The worst case for $\alphar$ is $\alphar = \frac{1}{3}$, and therefore
$\phi_{\cra}(\alphar) \leq \phi_{\cra}(1/3) \leq 2.313$ for all $\alphar$.
We obtain that the time for generating a representative family is
$O^*(2.619^{k/4} \cdot 2.313^{k/4}) = O^*(1.569^k)$.

In order to improve the running time we use the following idea
from~\cite{zehavi2015mixing}.
Suppose that we guessed a partition $L,R$ that satisfies the following property:
The vertices of the target path $P = p_1,\ldots,p_k$ that are in $L$
are distributed uniformly among $p_1,\ldots,p_{k/2}$.
Namely, the number of vertices among $p_1,\ldots,p_i$ that are in $L$ is
approximately $(1-\zeta) i$ for all $i \leq k/2$.
We call this property the \emph{uniformity property}.
Assuming the uniformity property, we can restrict the pairs of indices
$\jr,\jl$ for which we construct a family
$\widehat{\mathcal{P}}^{\jr,\jl}_{L,v}$ by requiring that
$\frac{\jr}{\jl+\jr} \approx \zeta$.
This prevents the worst case choices of $\alphal$ and $\alphar$
(which are $1-\frac{1}{\sqrt{5}}$ and $\frac{1}{3}$, respectively, when
$\zeta = 0.5$) to occur simultaneously:
If $\alphal = 1-\frac{1}{\sqrt{5}}$ then
$\jl = (1-\frac{1}{\sqrt{5}}) \frac{1}{4}k$
and the requirement $\frac{\jr}{\jl+\jr} \approx \zeta$ implies
(when $\zeta = 0.5$) that $\jr \approx \jl$ and therefore
$\alphar \approx \frac{(1-\frac{1}{\sqrt{5}})\frac{1}{4}k}{\frac{3}{4}k} \approx
0.184$.
Additionally, when $\alphar = \frac{1}{3}$, we have that $\alphal \approx 1$
(since $\jr = \frac{1}{4}k$ and therefore $\jl \approx \frac{1}{4}k$).

Unfortunately, the uniformity property cannot be guaranteed when the partition
$L,R$ is constructed deterministically.
The solution to this problem is as follows:
The first half $p_1,\ldots,p_{k/2}$ of the target path $P=p_1,\ldots,p_k$ is
partitioned into $m$ \emph{sub-paths} each containing $\epsilon k$ internal
vertices (we assume for simplicity that $\epsilon k$ is integer).
Now, let $P_1,\ldots,P_m$ be an ordering of the sub-path
such that $|P_i\cap L|\geq |P_{i+1}\cap L|$ for all $i$.
Let $s_i,t_i$ be the first and last vertex of $P_i$, respectively.
Suppose that we guessed the vertices
$s_i$ and $t_i$ for all $i$.
The algorithm works in iterations, where in the $i$-th iteration the algorithm
tries to construct the path $P_i$.
The order property $|P_i\cap L|\geq |P_{i+1}\cap L|$
serves as a replacement for the uniformity property.
Namely, the worst case for the time complexity is when each sub-path $P_i$
contains $(1-\zeta) \epsilon k$ vertices from $L$.
Therefore, the analysis done under the uniformity property also applies here
for large $\epsilon$.

The algorithm works in two stages, where the first stage
tries to construct the sub-paths $P_1,\ldots,P_m$ which are sub-paths
of $p_1,\ldots,p_{k/2}$, and the second stage tries to construct
$P_{m+1}= p_{k/2+1},\ldots,p_k$.
In order to obtain optimal time complexity, the two stages should have
the same time complexities.
To obtain this, we take $P_1,\ldots,P_m$ to be sub-paths of
$p_1,\ldots,p_{\delta k}$ for some constant $\delta$, and
$P_{m+1} = p_{\delta k+1},\ldots,p_k$.

We note that Zehave~\cite{zehavi2015mixing} used a different approach
for solving the universal family size problem.
The algorithm of~\cite{zehavi2015mixing} guesses a coloring of the vertices
by blue and red
such that exactly $\gamma k$ vertices of the target path are colored blue,
where $\gamma = 0.084$.
Then, only the blue vertices of the graph are partitioned into sets $L$ and $R$.
The size of the universal family used by the algorithm is roughly $2^{\gamma k}$
which is small.
However, since most of the vertices of the target path are colored red,
the improvement in time complexity over the $O^*(2.619^k)$-time algorithm
is small.

\section{The algorithm}
In this section we give a more detailed description of the algorithm and
analyze its time complexity.

Let $\delta,\zeta,\epsilon$ be some constants to be determined later.
To simplify the presentation, we define the following variables:
\begin{align*}
m & = \delta\cdot\frac{1}{\epsilon}\\
\Psize & = \lceil \epsilon k\rceil\\
\Lnum & = m \cdot \Psize - \lfloor \zeta \cdot m \cdot \Psize \rfloor
= \lceil (1-\zeta)m \cdot \Psize \rceil \\
\Rnum & = k-2-m - \Lnum\\
\Lnumi{i} & = (1-\zeta)i \cdot \Psize
\end{align*}
These variables have the following meanings.
Recall that we partition the target path into $m+1$ sub-paths
$P_1,\ldots,P_{m+1}$
(we can choose $\epsilon$ and $\delta$ such that $m$ is an integer).
$\Psize$ is the number of internal vertices in $P_i$ for all $i \leq m$.
$\Lnum$ and $\Rnum$ are the number of internal vertices of the target path
that are in $L$ and $R$, respectively.
$\Lnumi{i}$ is a lower bound on the number of internal vertices that are in $L$
in $P_1,\ldots,P_i$
(recall that we assume that $|P_i\cap L|\geq |P_{i+1}\cap L|$
for all $i\leq m-1$).

\subsection{Algorithm for {\mdseries\cutkpath}}
\label{sec:cutkpath}

Similarly to Zehavi~\cite{zehavi2015mixing}, we define a problem
called \cutkpath\ (we note that the definition here is different
than the one in~\cite{zehavi2015mixing}).
The input to this problem is a directed graph $G=(V,E)$,
an integer $k$,
a partition of $V$ into disjoint sets $L,R$,
a sequence of distinct vertices $V_e = (v_1,\ldots,v_{m+2})$,
and a permutation $\pi \colon [m] \to [m]$.
We denote
$s_i = v_{\pi(i)}$ and $t_i = v_{\pi(i)+1}$ for all $i \leq m$.
Additionally, $s_{m+1} = v_{m+1}$ and $t_{m+1} = v_{m+2}$.

The goal of the problem is to decide whether there are paths
$P_1,\ldots,P_{m+1}$ with the following properties.
\begin{enumerate}
\item\label{prop:si-ti}
For all $i$, the first vertex of $P_i$ is $s_i$ and the last vertex is $t_i$.
\item\label{prop:disjoint-Ve}
For all $i$, the internal vertices of $P_i$ are disjoint from $V_e$.
\item\label{prop:disjoint-Pj}
For all $i \neq j$, the internal vertices of $P_i$ are disjoint from the
internal vertices of $P_j$.
\item\label{prop:Psize}
For every $i \leq m$, the number of internal vertices of $P_i$ is $\Psize$.
\item\label{prop:Psize-last}
The number of internal vertices of $P_{m+1}$ is $k-2-m-m\cdot \Psize$.
\item\label{prop:Lnumi}
For every $i \leq m$, the number of internal vertices of $P_1,\ldots,P_i$
that are in $L$ is at least $\Lnumi{i}$. 
\item\label{prop:Lnum}
The number of internal vertices of $P_1,\ldots,P_{m}$ that are in $L$
is $\Lnum$.
\item\label{prop:disjoint-L}
The internal vertices of $P_{m+1}$ are from $R$.
\end{enumerate}
We note that property~\ref{prop:Lnumi} follows from the assumption that
the paths $P_1,\ldots,P_m$ are ordered such that
$|P_{i}\cap L|\geq |P_{i+1}\cap L|$ for every $i\leq m$.
Also note that properties~\ref{prop:Lnum} and~\ref{prop:disjoint-L} implies that
the number of internal vertices in all the paths that are in $R$
is $\Rnum$.

We now give an algorithm for solving \cutkpath.
We note that the algorithm is based on the algorithm of~\cite{zehavi2015mixing}.
The algorithm consists of two stages.

The first stage constructs the paths $P_1,\ldots,P_m$.
This stage builds a table $M$ in which
$M[i,\jl,\jr,v]$ is a family that $(\Lnum-\jl,\Rnum-\jr)$-represents
(where the universe $U = V$ is partitioned into sets $U_1 = L$ and $U_2 = R$)
the family of all sets of the form
$(P_1 \cup \cdots \cup P_{i-1} \cup P'_i) \setminus V_e$,
where $P_1,\ldots,P_{i-1},P'_i$ are paths such that
\begin{itemize}
\item
$P_1,\ldots,P_{i-1}$ satisfy properties~\ref{prop:si-ti},
\ref{prop:disjoint-Ve}, \ref{prop:disjoint-Pj}, \ref{prop:Psize},
and~\ref{prop:Lnumi}.
\item
$P'_i$ satisfies properties~\ref{prop:disjoint-Ve} and
\ref{prop:disjoint-Pj}.
\item
The first vertex of $P'_i$ is $s_i$ and the last vertex of $P'_i$ is $v$.
\item
The total number of internal vertices of $P_1,\ldots,P_{i-1},P'_i$
that are in $L$ and $R$ is $\jl$ and $\jr$, respectively.
\end{itemize}
The indices $i,\jl,\jr,m$ have the following ranges:
$1 \leq i \leq m$,
$\Lnumi{i-1} \leq \jl \leq \min(i\cdot\Psize,\Lnum)$,
$1+(i-1)\cdot\Psize-\jl \leq \jr \leq i\cdot\Psize-\jl$,
and
\[ v\in \begin{cases}
N^+(s_i) \setminus V_e & \text{if } \jl+\jr = 1+(i-1)\cdot\Psize\\
N^-(t_i) \setminus V_e & \text{if } \jl+\jr = i\cdot\Psize\\
V\setminus V_e & \text{otherwise}
\end{cases}
\]
where $N^+(x)$ and $N^-(x)$ are the sets of out-neighbors and in-neighbors of
$x$, respectively.
If at least one of $i,\jl,\jr,v$ does not satisfy the requirements above,
assume that $M[i,\jl,\jr,v] = \emptyset$.
Note that the bounds on $\jr$ can be rewritten as
$1+(i-1)\cdot\Psize \leq \jl+\jr \leq i\cdot\Psize$.
Since the number of internal vertices in $P_1,\ldots,P_{i-1}$ is
$(i-1)\cdot\Psize$ (due to property~\ref{prop:Psize}),
this inequality forces the number of internal vertices of
$P'_i$ to be between 1 and $\Psize$.

The computation of an entry $M[i,\jl,\jr,v]$ is done as follows.
If $\jl+\jr > 1+(i-1)\cdot\Psize$ then
\[
M[i,\jl,\jr,v] = \begin{cases}
\{A\cup\{v\}\colon A\in\bigcup_{u \in N^-(v) \setminus V_e} M[i,\jl-1,\jr,u]\}
& \text{if }v \in L\\
\{A\cup\{v\}\colon A\in\bigcup_{u \in N^-(v) \setminus V_e} M[i,\jl,\jr-1,u]\}
& \text{otherwise}
\end{cases}
\]
If $\jl+\jr = 1+(i-1)\cdot\Psize$ and $i > 1$ then
\[
M[i,\jl,\jr,v] = \begin{cases}
\{A\cup\{v\}\colon A\in\bigcup_{u \in N^-(t_{i-1}) \setminus V_e}
 M[i-1,\jl-1,\jr,u]\}
& \text{if }v \in L\\
\{A\cup\{v\}\colon A\in\bigcup_{u \in N^-(t_{i-1}) \setminus V_e}
 M[i-1,\jl,\jr-1,u]\}
& \text{otherwise}
\end{cases}
\]
Finally, if $\jl+\jr = 1$ then
\[
M[i,\jl,\jr,v] = \begin{cases}
\{\{v\}\}
& \text{if ($v \in L$ and $\jl = 1$) or ($v \in R$ and $\jr = 1$)}\\
\emptyset
& \text{otherwise}
\end{cases}
\]
Then, use Theorem~\ref{thm:generalized-representative}
to find a family that
$(\Lnum-\jl,\Rnum-\jr)$-represents $M[i,\jl,\jr,v]$, and replace
$M[i,\jl,\jr,v]$ with this family.
Theorem~\ref{thm:generalized-representative} is applied with
$U_1 = L$ and $U_2 = R$, and with constants $\cla,\cra$.

The second stage of the algorithm constructs the path $P_{m+1}$.
This stage constructs a table $K[j,v]$ in which
$K[j,v]$ is a family that $(\Rnum-j)$-represents
the family of all sets of the form
$(P_1 \cup \cdots \cup P_m \cup P'_{m+1}) \setminus (V_e \cup L)$,
where $P_1,\ldots,P_m,P'_{m+1}$ are paths such that
\begin{itemize}
\item
$P_1,\ldots,P_m$ satisfy properties~\ref{prop:si-ti}, \ref{prop:disjoint-Ve},
\ref{prop:disjoint-Pj}, \ref{prop:Psize}, \ref{prop:Lnumi}, \ref{prop:Lnum},
and~\ref{prop:disjoint-L}.
\item
$P'_{m+1}$ satisfies properties~\ref{prop:disjoint-Ve},
\ref{prop:disjoint-Pj}, and~\ref{prop:disjoint-L}.
\item
The first vertex of $P'_{m+1}$ is $s_{m+1}$ and the last vertex of $P'_{m+1}$
is $v$.
\item
The number of internal vertices of $P_1,\ldots,P_m,P'_{m+1}$
that are in $R$ is $j$.
\end{itemize}
The indices $j,v$ have the following ranges:
$1 + m \cdot \Psize -\Lnum \leq j \leq \Rnum$ and
\[ v\in \begin{cases}
R \cap (N^+(s_{m+1}) \setminus V_e) & \text{if } j = 1+m\cdot\Psize-\Lnum\\
R \cap (N^-(t_{m+1}) \setminus V_e) & \text{if } j = \Rnum\\
R\setminus V_e & \text{otherwise}
\end{cases}
\]
The computation of an entry $K[j,v]$ is done as follows.
First, perform
\[
K[j,v] =
\begin{cases}
\{A\cup\{v\}\colon A\in \bigcup_{u \in R \cap (N^-(v)\setminus V_e)} K[j-1,u]\}
& \text{if } j > 1+m\cdot\Psize-\Lnum\\
\{A\cup\{v\}\colon A\in \bigcup_{u \in N^-(t_m) \setminus V_e}
 M[m,\Lnum,j-1,u]\}
& \text{otherwise}
\end{cases}
\]
Then, use Theorem~\ref{thm:representative}
to find a family that $(\Rnum-j)$-represents $K[j,v]$,
and replace $K[j,v]$ with this family.
Theorem~\ref{thm:representative} is applied with $U = R$ and constant $\crb$.

After the second stage, if there is a vertex $v$ such that
$K[\Rnum,v] \neq \emptyset$, the algorithm returns `yes'.
Otherwise, the algorithm returns `no'.

\subsection{Algorithm for {\mdseries\kpath}}
\label{sec:kpath}

The following algorithm solves the \kpath\ problem, using the algorithm
for \cutkpath\ of the previous section.

\begin{algtab}
Construct a strict $(n,\Lnum,\Rnum,\zeta)$-approximate universal family
$\mathcal{F}$ over the universe $V$.\\
\algforeach{sequence of distinct vertices $V_e = (v_1,\ldots,v_{m+2})$}
  \algforeach{permutation $\pi \colon [m] \to [m]$}
    \algforeach{$F \in \mathcal{F}$}
      $L \gets F$ and $R \gets V\setminus F$.\\
      Run the \cutkpath\ algorithm on the instance $(G,k,L,R,V_e,\pi)$.\\
      \algifthen{the algorithm returned `yes'}{\algreturn `yes'.}
    \algend
  \algend
\algend
\algreturn `no'\\
\end{algtab}

\subsection{Analysis}
We now analyze the time complexity of our algorithm.
Consider the algorithm for \cutkpath\ of Section~\ref{sec:cutkpath}.
By Theorem~\ref{thm:generalized-representative}, the time complexity of
the first stage of the algorithm is $O^*(X_1 2^{o(k)})$, where
\begin{multline*}
X_1 =
 \max_{i=1}^m
 \max_{\jl = \Lnumi{i-1}}^{\Lnum}
 \max_{\jr = 1+(i-1)\cdot\Psize-\jl}^{i\cdot\Psize-\jl}\\
\frac{{(\cla\cdot\Lnum)}^{2\cdot\Lnum-\jl}}
     {\jl^{\jl}{(\cla\cdot\Lnum-\jl)}^{2\cdot\Lnum-2\jl}}
\cdot
\frac{{(\cra\cdot\Rnum)}^{2\cdot\Rnum-\jr}}
     {\jr^{\jr}{(\cra\cdot\Rnum-\jr)}^{2\cdot\Rnum-2\jr}}.
\end{multline*}
Our goal is to estimate $Y_1 = X_1^{1/k}$.
To simplify the analysis, we redefine the values of the following variables:
\begin{align*}
\Psize & = \epsilon k\\
\Lnum & = (1-\zeta) \delta k\\
\Rnum & = (1-\delta+\zeta\delta) k\\
\Lnumi{i} & = (1-\zeta)(i+1)\cdot \Psize
\end{align*}
Note that since we can assume that $k$ is large enough and that $\epsilon$ is
small enough, the value of $Y_1$ for the new definitions of the variables
is arbitrarily close to the value of $Y_1$ for the old definitions.
Define $\alphal = \frac{\jl}{\Lnum}$ and $\alphar = \frac{\jr}{\Rnum}$.
The range of $\jl$ in the definition of $X_1$ (over all $i$) is
$0 \leq \jl \leq \Lnum$.
Therefore, $0 \leq \alphal \leq 1$.
The range of $\jl$ in the second maximum in the definition of $X_1$ implies that
$\jl \geq \Lnumi{i-1} = i(1-\zeta)\epsilon k = i\epsilon \cdot\Lnum/\delta$.
Therefore, $i \leq \frac{1}{\epsilon}\delta \cdot \frac{\jl}{\Lnum}
= \frac{1}{\epsilon}\delta\alphal$.
The range of $\jr$ in the third maximum implies that
\begin{align*}
\jr & \leq i\cdot\Psize - \jl
\leq \frac{1}{\epsilon}\delta\alphal \cdot \epsilon k
  - \Lnum\cdot\alphal \\
& = \delta\alphal k - (1-\zeta)\delta k \cdot \alphal
 =  \zeta\delta\alphal k.
\end{align*}
Therefore,
\[ \alphar = \frac{\jr}{\Rnum} \leq
\frac{\zeta\delta}{1-\delta+\zeta\delta}
\cdot \alphal.
\]
We obtain that
\begin{align*}
Y_1 & =
\max_{0 \leq \alphal \leq 1}
\max_{0 \leq \alphar \leq \frac{\zeta\delta}{1-\delta+\zeta\delta}\cdot \alphal}
{\phi_{\cla}(\alphal)}^{\Lnum/k}\cdot
{\phi_{\cra}(\alphar)}^{\Rnum/k}
\\ & =
\max_{0 \leq \alphal \leq 1}
\max_{0 \leq \alphar \leq \frac{\zeta\delta}{1-\delta+\zeta\delta}\cdot \alphal}
{\phi_{\cla}(\alphal)}^{(1-\zeta)\delta}\cdot
{\phi_{\cra}(\alphar)}^{1-\delta+\zeta\delta}.
\end{align*}
The time complexity of the second stage is $O^*(X_2 2^{o(k)})$, where
\[
X_2 = \max_{j = 1 + m \cdot \Psize-\Lnum}^{\Rnum}
\frac{{(\crb\cdot\Rnum)}^{2\cdot\Rnum-j}}
     {j^{j}{(\crb\cdot\Rnum-j)}^{2\cdot\Rnum-2j}}.
\]
Under the simplified definitions of the variables,
we have that $j$ satisfies
$j \geq m \cdot \Psize - \Lnum =
\zeta\delta k$.
Therefore for $\alpha = \frac{j}{\Rnum}$ we have
$\alpha \geq \frac{\zeta\delta k}{\Rnum} =
\frac{\zeta\delta}{1-\delta+\zeta\delta}$.
Let $Y_2 = X_2^{1/k}$.
We obtain that
\[ Y_2 =
\max_{\frac{\zeta\delta}{1-\delta+\zeta\delta} \leq \alpha \leq 1}
{\phi_{\crb}(\alpha)}^{1-\delta+\zeta\delta}.
\]
The time complexity of the algorithm for \cutkpath\ is
$O^*({\max(Y_1,Y_2)}^k 2^{o(k)})$.
The algorithm for \kpath\ generates $O(|\mathcal{F}|\cdot n^{O(1/\epsilon)})$
instances of \cutkpath,
where $\mathcal{F}$ is a strict $(n,\Lnum,\Rnum,\zeta)$-approximate universal
family.
By Lemma~\ref{lem:approximate-universal} and Lemma~\ref{lem:strict}
we have that
$|\mathcal{F}| = O^*(Y_3^k 2^{o(k)})$ where
\[Y_3 = \frac{
{(\zeta^\zeta {(1-\zeta)}^{(1-\zeta)})}^{\delta} }{
{((1-\zeta)\delta)}^{(1-\zeta)\delta}
{(1-\delta+\zeta\delta)}^{1-\delta+\zeta\delta} }.
\]
Therefore, the time complexity of the algorithm for \kpath\ is
$O^*({\max(Y_1,Y_2)}^k \cdot Y_3^k 2^{o(k)})$.
We now choose the following parameters in order to minimize the time complexity:
$\epsilon = 10^{-10}$,
$\delta = 0.49533$, $\zeta = 0.712$, $\cla = 1.136$,
$\cra = 1.645$, and $\crb = 1+\frac{1}{\sqrt{5}} \approx 1.447$.
Under this choice of parameters, $\max(Y_1,Y_2) \cdot Y_3 < 2.5537$.
The value of $Y_1$ is maximized when $\alphal \approx 0.864$ and
$\alphar \approx 0.356$.
The value of $Y_2$ is maximized when $\alpha =  1-\frac{1}{\sqrt{5}} \approx
0.553$.
The values of these parameters were obtained with a Python script.
See the appendix for details.

We note that there is a special case which was omitted in the analysis above:
In the computation of $K[j,v]$ for $j = 1+m\cdot\Psize-\Lnum$, the algorithm
first builds a family $K[j,v]$ of size
\begin{align*}
O^*\left(\bigcup_{u\in V} M[m,\Rnum,j-1,u]\right) &=
O^*\left( \frac{{(\cra\cdot\Rnum)}^{\Rnum}}
     {j^{j}{(\cra\cdot\Rnum-j)}^{\Rnum-j}} \right)\\
 & =
O^*\left( \left(\frac{\cra}{\alpha^\alpha(\cra-\alpha)^{1-\alpha}}
          \right)^{(1-\delta+\zeta\delta)k} 
\right)
\end{align*}
where $\alpha = j/\Rnum \approx \frac{\zeta\delta}{1-\delta+\zeta\delta}$.
Then, the time for constructing the representative family is
\[
O^*\left( |K[j,v]| \cdot \frac{{(\crb\cdot\Rnum)}^{\Rnum-j}}
     {{(\crb\cdot\Rnum-j)}^{\Rnum-j}} \right) =
O^*\left( \left(\frac{\cra\cdot\crb^{1-\alpha}}
{\alpha^\alpha(\cra-\alpha)^{1-\alpha}(\crb-\alpha)^{1-\alpha}}
          \right)^{(1-\delta+\zeta\delta)k}
\right)
\]
This does not change the time complexity of the algorithm since for choice
of the parameters given above,
the last expression is smaller than
$Y_2^k$.

\bibliographystyle{plain}
\bibliography{parameterized}

\appendix

\section{Computation of optimal parameters}
In this section we give a Python script for finding optimal parameters
for the algorithm.
To speed-up the computation, we use the following observation.
Consider the computation of
\[ Y_1 = \max_{0 \leq \alphal \leq 1}
\max_{0 \leq \alphar \leq \frac{\zeta\delta}{1-\delta+\zeta\delta}\cdot \alphal}
{\phi_{\cla}(\alphal)}^{(1-\zeta)\delta}\cdot
{\phi_{\cra}(\alphar)}^{1-\delta+\zeta\delta}.\]
Let $\alphar^*$ be the value of $\alpha \in [0,1]$ that maximizes
$\phi_{\cra}(\alpha)$.
The function $\phi_{\cra}(\alphar)$ is monotonically increasing in the range
$[0,\alphar^*]$.
Moreover, for the relevant values of $\delta$ and $\zeta$ we have that
$\frac{\zeta\delta}{1-\delta+\zeta\delta}\cdot \alphal \leq
\frac{\zeta\delta}{1-\delta+\zeta\delta} < 0.5 < \alphar^*$.
Therefore,
\[ Y_1 = \max_{0 \leq \alphal \leq 1}
{\phi_{\cla}(\alphal)}^{(1-\zeta)\delta}\cdot
{\phi_{\cra}\left(
\frac{\zeta\delta}{1-\delta+\zeta\delta}\cdot\alphal
\right)}^{1-\delta+\zeta\delta}.\]
The computation of $Y_1$ using the second formula is much faster than using the first formula.
Similarly, we have that
$\frac{\zeta\delta}{1-\delta+\zeta\delta} < 0.5 < \alpha^*$,
where $\alpha^*$ is the value of $\alpha \in [0,1]$ that maximizes
$\phi_{\crb}(\alpha)$.
Therefore,
\[ Y_2 =
{\phi_{\crb}(\alpha^*)}^{1-\delta+\zeta\delta}.
\]
It follows that the optimal value for $\crb$ is $\crb = 1+\frac{1}{\sqrt{5}}$
and $Y_2 = {\phi_{\crb}(1-\frac{1}{\sqrt{5}})}^{1-\delta+\zeta\delta} = 
(3/2+\sqrt{5}/2)^{1-\delta+\zeta\delta}$.

The script for computing the value of the parameters is as follows.
\begin{lstlisting}[language=Python]
from math import sqrt

def frange(a,b, steps):
	return [a+(b-a)*float(x)/steps for x in range(steps+1)]

def phi(alpha,c):
	return c**(2-alpha)/alpha**alpha/(c-alpha)**(2-2*alpha)

def calc_Y1(delta,zeta,cl,cr):
	Lnum = (1-zeta)*delta
	Rnum = 1-delta+zeta*delta
	Y1 = 0
	for alphal in frange(0, 1.0, 1000):
		alphar = zeta*delta/(1-delta+zeta*delta)*alphal
		Y1 = max(Y1, phi(alphal,cl)**Lnum * phi(alphar,cr)**Rnum)
	return Y1

def calc_Y2(delta,zeta):
	Rnum = 1-delta+zeta*delta
	return (1.5+sqrt(5)/2)**Rnum

def calc_Y3(delta,zeta):
	x = (1-zeta)*delta
	return (zeta**zeta*(1-zeta)**(1-zeta))**delta/x**x/(1-x)**(1-x)

zeta_range = frange(0.10, 0.9, 80)
cl_range = frange(1.00, 1.40, 40)
cr_range = frange(1.40, 1.80, 40)
best_Y = 1e10
for cl in cl_range:
	for cr in cr_range:
		for zeta in zeta_range:
			delta_a = 0.25
			delta_b = 0.75
			for i in range(30):
				delta = (delta_a+delta_b)/2
				Y1 = calc_Y1(delta,zeta,cl,cr)
				Y2 = calc_Y2(delta,zeta)
				if Y1 > Y2:
					delta_b = delta
				else:
					delta_a = delta
			Y = Y1*calc_Y3(delta,zeta)
			if Y < best_Y:
				best_Y = Y
				best_params = [cl,cr,zeta,delta]
print best_Y
print best_params
\end{lstlisting}

\end{document}